\newtheorem{theorem}{Theorem}
\newtheorem{lemma}{Lemma}[section]
\newtheorem{corollary}[theorem]{Corollary}
\newtheorem{remark}{Remark}[section]
\newcommand{\ignore}[1]{}
\newcommand{\cA}{{\cal A}}
\newcommand{\cB}{\mathcal{B}}
\newcommand{\Z}{{\mathbb Z}}
\newcommand{\eps}{\varepsilon}
\newcommand{\bD}{\mathbf{D}}
\newcommand{\br}{\mathbf{r}}
\newcommand{\bc}{\mathbf{c}}
\newcommand{\bone}{\mathbf{1}}
\newcommand{\piR}{\pi_\br}
\newcommand{\piC}{\pi_\bc}
\newcommand{\boneR}{\piR}
\newcommand{\boneC}{\piC}
\newcommand{\ceil}[1]{\lceil#1\rceil}
\newcommand{\Sec}[1]{\hyperref[sec:#1]{\S\ref*{sec:#1}}} 
\newcommand{\Eqn}[1]{\hyperref[eq:#1]{(\ref*{eq:#1})}} 
\newcommand{\Fig}[1]{\hyperref[fig:#1]{Fig.\,\ref*{fig:#1}}} 
\newcommand{\Tab}[1]{\hyperref[tab:#1]{Tab.\,\ref*{tab:#1}}} 
\newcommand{\Thm}[1]{\hyperref[thm:#1]{Theorem\,\ref*{thm:#1}}} 
\newcommand{\Fact}[1]{\hyperref[fact:#1]{Fact\,\ref*{fact:#1}}} 
\newcommand{\Lem}[1]{\hyperref[lem:#1]{Lemma\,\ref*{lem:#1}}} 
\newcommand{\Prop}[1]{\hyperref[prop:#1]{Prop.~\ref*{prop:#1}}} 
\newcommand{\Cor}[1]{\hyperref[cor:#1]{Corollary~\ref*{cor:#1}}} 
\newcommand{\Conj}[1]{\hyperref[conj:#1]{Conjecture~\ref*{conj:#1}}} 
\newcommand{\Def}[1]{\hyperref[def:#1]{Definition~\ref*{def:#1}}} 
\newcommand{\Alg}[1]{\hyperref[alg:#1]{Alg.~\ref*{alg:#1}}} 
\newcommand{\Ex}[1]{\hyperref[ex:#1]{Ex.~\ref*{ex:#1}}} 
\newcommand{\Clm}[1]{\hyperref[clm:#1]{Claim~\ref*{clm:#1}}} 
\def\nnz{\mathsf{nnz}}
\renewcommand{\epsilon}{\eps}
\newcommand{\error}{\mathsf{error}}
\newcommand{\total}{h}
\colorlet{shadecolor}{blue!05}
\begin{document}
\title{Better and Simpler Error Analysis of the \\ Sinkhorn-Knopp Algorithm for Matrix Scaling}
\author{Deeparnab Chakrabarty \\ Dartmouth College \\ deeparnab@dartmouth.edu \and Sanjeev Khanna\thanks{This work was supported in part by the National Science Foundation grants CCF-1552909 and CCF-1617851.}  \\ University of Pennsylvania\\ sanjeev@cis.upenn.edu}
\maketitle
\begin{abstract}
	Given a non-negative $n \times m$ real matrix $A$, the {\em matrix scaling} problem is to determine if it is possible to scale the rows and columns so that each row and each column sums to a specified target value for it. 
	The matrix scaling problem arises in many algorithmic applications, perhaps most notably as a preconditioning step in solving linear system of equations. One of the most natural and by now classical approach to matrix scaling is the Sinkhorn-Knopp algorithm (also known as the RAS method) where one alternately scales either all rows or all columns to meet the target values. In addition to being extremely simple and natural, another appeal of this procedure is that it easily lends itself to parallelization. A central question is to understand the rate of convergence of the Sinkhorn-Knopp algorithm. 
	
	Specifically, given a suitable error metric to measure deviations from target values, and an error bound $\varepsilon$, how quickly does the Sinkhorn-Knopp algorithm converge to an error below $\varepsilon$? While there are several non-trivial convergence results known about the Sinkhorn-Knopp algorithm, perhaps somewhat surprisingly, even for natural error metrics such as $\ell_1$-error or $\ell_2$-error, this is not entirely understood. In this paper, we present 
	an elementary convergence analysis for the Sinkhorn-Knopp algorithm that improves upon the previous best bound. In a nutshell, our approach is to show (i) a simple bound on the number of iterations needed so that the KL-divergence between the current row-sums and the target row-sums drops below a specified threshold $\delta$, and (ii) then show that for a suitable choice of $\delta$, whenever KL-divergence is below $\delta$, then the $\ell_1$-error or the $\ell_2$-error is below $\varepsilon$. The well-known Pinsker's inequality immediately allows us to translate a bound on the KL divergence to a bound on $\ell_1$-error. To bound the $\ell_2$-error in terms of the KL-divergence, we establish a new inequality, referred to as \eqref{eq:pinskerp}. This inequality is a strengthening of Pinsker's inequality and may be of independent interest. 
	Our analysis of $\ell_2$-error significantly improves upon the best previous convergence bound for $\ell_2$-error. 
	
\end{abstract}
\section{Introduction}
In the matrix scaling problem one is given an $n\times m$ non-negative matrix $A$, and positive integer vectors $\br \in \Z^n_{>0}$ and $\bc\in \Z^m_{>0}$ with the same $\ell_1$ norm $\sum_{i=1}^n \br_i = \sum_{j=1}^m \bc_j = \total$.
The objective is to determine if there exist diagonal matrices $R\in \mathbb{R}^{n\times n}$ and $S\in \mathbb{R}^{m\times m}$ such that the $i$th row of the matrix $RAS$ sums to $\br_i$ for all $1\le i\le n$ {\em and} the $j$th column of $RAS$ sums to $\bc_j$ for all $1\le j\le m$.
Of special importance is the case when $n=m$ and $\br \equiv \bc \equiv \mathbf{1}_n$, the $n$-dimensional all-ones vector --
the $(\bone,\bone)$-matrix scaling problem wishes to scale the rows and columns of $A$ to make it doubly stochastic.
This problem arises in many different areas ranging from transportation planning~\cite{DemingStephan, OrtuzarWillumsen} to quantum mechanics~\cite{Schrodinger, Aaronson}; we refer the reader to a recent comprehensive survey by Idel~\cite{Idel} for more examples.

One of the most natural algorithms for the matrix scaling problem is the following Sinkhorn-Knopp algorithm~\cite{Sinkhorn,Sinkhorn-Knopp}, which is known by many names including the RAS method~\cite{Bachrach} and the Iterative Proportional Fitting Procedure~\cite{Ruschendorf1995}.
The algorithm starts off by multiplicatively scaling all the columns by the columns-sum times $\bc_j$ to get a matrix $A^{(0)}$ with column-sums $\bc$. Subsequently, for $t\geq 0$, it obtains the $B^{(t)}$ by scaling each row of $A^{(t)}$ by the respective row-sum times $\br_i$, and obtain $A^{(t+1)}$ by scaling each column of $B^{(t)}$ by the respective column sums time $\bc_j$. More precisely,
\[
A^{(0)}_{ij} := \frac{A_{ij}}{\sum_{i=1}^n A_{ij}}\cdot \bc_j ~~~~~~\forall t\geq 0,~~~ B^{(t)}_{ij} := \frac{A^{(t)}_{ij}}{\sum_{j=1}^m A^{(t)}_{ij}}\cdot \br_i, ~~~ A^{(t+1)}_{ij} := \frac{B^{(t)}_{ij}}{\sum_{i=1}^n B^{(t)}_{ij}}\cdot \bc_j
\]

The above algorithm is simple and easy to implement and each iteration takes $O(\nnz(A))$, the number of non-zero entries of $A$. 
Furthermore, it has been known for almost five decades~\cite{Sinkhorn, Sinkhorn-Knopp,FranklinLorenz,Soules} that if $A$ is $(\br,\bc)$-scalable then the above algorithm asymptotically\footnote{Computationally, this asymptotic viewpoint is unavoidable in the sense that there are simple examples for which the unique matrix scaling matrices need to have irrational entries.
	For instance, consider the following example from Rothblum and Schneider~\cite{RothSch}. The matrix is 
	$\begin{bmatrix}
	1 & 1 \\
	1 & 2
	\end{bmatrix}$ 
	with $\br\equiv\bc\equiv [1,1]^\top$. The unique $R$ and $S$ matrices are 
	$\begin{bmatrix}
	(\sqrt{2}+1)^{-1} & 0 \\
	0 & (\sqrt{2}+2)^{-1}
	\end{bmatrix}$ and
	$\begin{bmatrix}
	\sqrt{2} & 0 \\
	0 & 1
	\end{bmatrix}$, respectively, giving $RAS = 
	\begin{bmatrix}
	2 - \sqrt{2} & \sqrt{2} - 1 \\
	\sqrt{2} - 1 & 2 - \sqrt{2}
	\end{bmatrix}$.} converges to a right solution. More precisely, 
given $\eps > 0$, there is some finite $t$ by which 
one obtains a matrix which is ``$\eps$-close to having row- and column-sums $\br$ and $\bc$''. 

However, the rate of convergence of this simple algorithm is still not fully understood.
Since the rate depends on how we measure ``$\eps$-closeness'', we  look at two natural error definitions. 
For any $t$, let $\br^{(t)} := A^{(t)}\bone_m$ denote the vector of row-sums of $A^{(t)}$.
Similarly, we define $\bc^{(t)}:= {B^{(t)}}^\top\bone_n$ to be the vector of the column-sums of $B^{(t)}$.
Note that $\sum_{i=1}^n \br^{(t)}_i = \sum_{j=1}^m \bc^{(t)}_j = \total$ for all $t$.
The error of the matrix $A_t$ (the error of matrix $B_t$ similarly defined) is
\[
\ell_1\textrm{-error}:~~ \error_1(A_t) := ||\br^{(t)}-\br||_1 ~~~~~~~~~~~~~ \ell_2\textrm{-error}:~~\error_2(A_t) := ||\br^{(t)} - \br||_2
\]
\noindent
In this note, we give simple convergence analysis for both error norms. 
Our result is the following. 

\begin{theorem}\label{thm:1}
	Given a matrix $A\in \mathbb{R}^{n\times m}_{\geq 0}$ which is $(\br,\bc)$-scalable, and any $\epsilon> 0$, the Sinkhorn-Knopp algorithm 
	\begin{enumerate}
		\item in time $t = O\left(\frac{h^2\ln\left(\Delta\rho/\nu\right)}{\eps^2}\right)$ returns a matrix $A_t$ or $B_t$ with $\ell_1$-error $\leq \eps$.
		\item in time $t = O\left(\rho h \ln\left(\Delta\rho/\nu\right) \cdot \left(\frac{1}{\eps} + \frac{1}{\eps^2}\right)\right)$ returns a matrix $A_t$ or $B_t$ with $\ell_2$-error $\leq \eps$.
	\end{enumerate}
	Here $h = \sum_{i=1}^n \br_i = \sum_{j=1}^m \bc_j$, $\rho = \max(\max_i \br_i, \max_j \bc_j)$, $\nu = \frac{\min_{i,j:A_{ij}>0} A_{ij}}{\max_{i,j} A_{ij}}$, and $\Delta = \max_j |\{i: A_{ij} > 0\}|$ is the maximum number of non-zeros in any column of $A$.
\end{theorem}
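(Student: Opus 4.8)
The plan is to run a potential argument on the natural convex objective underlying Sinkhorn--Knopp and then convert a bound on KL-divergence into the two error bounds, the second via a strengthened Pinsker inequality.

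Write the iterate as $A^{(t)}_{ij} = x_i y_j A_{ij}$ for accumulated row/column scalings $x,y>0$, and consider the potential
\[
\Phi(x,y) := \sum_{i,j} x_i y_j A_{ij} - \sum_i \br_i \ln x_i - \sum_j \bc_j \ln y_j.
\]
First I would observe that a row-rescaling is exactly the block-coordinate minimization of $\Phi$ over $x$ (and a column-rescaling the minimization over $y$), so Sinkhorn--Knopp is alternating minimization on $\Phi$; moreover the first term always equals the total mass $h$. A direct computation then shows that the row update producing $B^{(t)}$ decreases $\Phi$ by exactly $h\cdot D(\br/h \,\|\, \br^{(t)}/h)$ and the column update producing $A^{(t+1)}$ by $h \cdot D(\bc/h \,\|\, \bc^{(t)}/h)$, where $D(\cdot\|\cdot)$ is the KL-divergence and $\br^{(t)},\bc^{(t)}$ are the current row/column sums. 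Summing telescopes: over $T$ iterations the total drop is at most $\Phi^{(0)} - \Phi^\ast$, so some half-step has KL-divergence at most $(\Phi^{(0)}-\Phi^\ast)/(hT)$.

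The next step is to bound the potential gap $\Gamma := \Phi^{(0)} - \Phi^\ast$. Normalizing so that $\max_{ij} A_{ij} = 1$ (which leaves the trajectory unchanged), the initial column sums are at most $\Delta$ and the integer targets are at least $1$, giving $\Phi^{(0)} \le h(1 + \ln\Delta)$. For the lower bound on $\Phi^\ast$ I would use scalability: at the optimum the matrix $M_{ij} = x_i^\ast y_j^\ast A_{ij}$ has marginals $\br,\bc$, so on the support $M_{ij} \le \rho$ and hence $\ln(x_i^\ast y_j^\ast) = \ln(M_{ij}/A_{ij}) \le \ln(\rho/\nu)$. Since $\sum_i \br_i \ln x_i^\ast + \sum_j \bc_j \ln y_j^\ast = \sum_{ij} M_{ij}\ln(x_i^\ast y_j^\ast)$ and $\sum_{ij}M_{ij}=h$, this yields $\Phi^\ast \ge h(1 - \ln(\rho/\nu))$, so $\Gamma \le h\ln(\Delta\rho/\nu)$. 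Consequently, after $T$ iterations some $A_t$ or $B_t$ has KL-divergence at most $\delta$ provided $T = O(\ln(\Delta\rho/\nu)/\delta)$.

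It remains to choose $\delta$ for each norm. For $\ell_1$, Pinsker's inequality gives $\error_1 = \norm{\br^{(t)}-\br}_1 \le h\sqrt{2\delta}$, so $\delta = \Theta(\eps^2/h^2)$ yields part~(1). For $\ell_2$ the plan is to prove and apply the strengthened inequality
\[
D(\br/h \,\|\, \br^{(t)}/h) \;\ge\; \frac{\norm{\br^{(t)}-\br}_2^2}{2\rho h\,(1+\norm{\br^{(t)}-\br}_2)},
\]
which I would obtain from the pointwise bound $x - \ln(1+x) \ge \frac{x^2}{2(1+|x|)}$ (valid for $x>-1$), applied with $x_i = (\br^{(t)}_i-\br_i)/\br_i$ and summed using $\br_i \le \rho$, $\br_i \ge 1$, and $\sum_i(\br^{(t)}_i - \br_i)=0$. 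Since $s \mapsto s^2/(1+s)$ is increasing, taking $\delta = \Theta\big(\eps^2/(\rho h(1+\eps))\big)$ forces $\error_2 \le \eps$, and substituting into $T = O(\ln(\Delta\rho/\nu)/\delta)$ gives part~(2). I expect the main obstacle to be precisely this $\ell_2$ step: Pinsker alone only yields $\error_2 \le \error_1 \le h\sqrt{2\delta}$, which reproduces the weaker $h^2/\eps^2$ dependence, so the improvement hinges entirely on establishing the strengthened inequality above with the correct $\rho h$ (rather than $h^2$) scaling. The potential-gap bound is the other delicate point, since the lower bound on $\Phi^\ast$ relies on controlling the optimal scalings through the scalability hypothesis.
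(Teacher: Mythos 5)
Your proof is correct and its skeleton is the same as the paper's: a telescoping potential argument whose per-half-step decrement is exactly the KL-divergence between the target and current marginals, a bound of order $\ln(\Delta\rho/\nu)$ on the total potential drop, and then Pinsker for $\ell_1$ plus a strengthened Pinsker-type inequality for $\ell_2$. The two genuine differences are in packaging. First, your potential is the convex dual objective $\Phi(x,y)=\sum_{ij}x_iy_jA_{ij}-\sum_i\br_i\ln x_i-\sum_j\bc_j\ln y_j$, whereas the paper uses $\bD(Z,A^{(t)})$ for a fixed exactly-scaled matrix $Z$; since $Z$ has marginals $(\br,\bc)$, these two potentials differ by an additive constant along the trajectory, so the decrement identities coincide. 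Your version buys a self-contained gap bound $\Phi^{(0)}-\Phi^\ast\le h\ln(\Delta\rho/\nu)$ (note you do implicitly need that the exactly-scaled point is the \emph{global} minimizer of $\Phi$, which follows from convexity in $(\ln x,\ln y)$ plus the first-order conditions — worth stating), while the paper instead uses nonnegativity of $\bD(Z,\cdot)$ together with an off-the-shelf upper bound $\bD(Z,A^{(0)})\le\ln(1+2\Delta\rho/\nu)$; the resulting iteration counts are the same. Second, for the $\ell_2$ step the paper proves the two-regime inequality
\[
\bD_{KL}(p\|q)\ \ge\ (1-\ln 2)\Bigl(\sum_{i\in\cA}|q_i-p_i|+\sum_{i\in\cB}\tfrac{(q_i-p_i)^2}{p_i}\Bigr),\qquad \cA=\{i:q_i>2p_i\},
\]
and then takes a minimum of an $\norm{\cdot}_2$ and an $\norm{\cdot}_2^2$ bound, whereas you use the single smooth pointwise bound $x-\ln(1+x)\ge x^2/(2(1+|x|))$, which interpolates between the $\chi^2$-type and $\ell_1$-type regimes and yields $\bD_{KL}(\piR\|\piR^{(t)})\ge \norm{\br^{(t)}-\br}_2^2/(2\rho h(1+\norm{\br^{(t)}-\br}_2))$ in one step; I checked the pointwise inequality and the reduction $\sum_i d_i^2/(1+d_i)\ge s^2/(1+s)$ with $s=\norm{d}_2$, and both are valid, giving exactly the claimed $O\bigl(\rho h\ln(\Delta\rho/\nu)(1/\eps+1/\eps^2)\bigr)$ bound. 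Your inequality is arguably cleaner for this application, though slightly less general than the paper's Lemma~\ref{lem:gen-pinsker}, which is stated for arbitrary distributions and arbitrary threshold $\theta$.
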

\noindent
For the special case of $n = m$ and $\br\equiv\bc\equiv\bone_n$, we get the following as a corollary.
\begin{corollary}\label{cor:1}
	Given a matrix $A\in \mathbb{Z}^{n\times n}_{\geq 0}$ which is $(\bone,\bone)$-scalable, and any $\epsilon> 0$, the Sinkhorn-Knopp algorithm 
	\begin{enumerate}
		\item in time $t = O\left(\frac{n^2\ln(\Delta/\nu)}{\eps^2}\right)$ returns a matrix $A_t$ or $B_t$ with $\ell_1$-error $\leq \eps$.
		\item in time $t = O\left(n \ln(\Delta/\nu) \cdot \left(\frac{1}{\eps} + \frac{1}{\eps^2}\right)\right)$ returns a matrix $A_t$ or $B_t$ with $\ell_2$-error $\leq \eps$.
	\end{enumerate}
Here $\Delta = \max_j |\{i: A_{ij} > 0\}|$ is the maximum number of non-zeros in any column of $A$.
\end{corollary}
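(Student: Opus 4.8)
Since \Cor{1} is the specialization of \Thm{1} to $n=m$ and $\br\equiv\bc\equiv\bone_n$ — in which case $h=\sum_i\br_i=n$ and $\rho=\max(\max_i\br_i,\max_j\bc_j)=1$, while $\nu$ and $\Delta$ are unchanged — I would obtain it by simply substituting these values into the two bounds of \Thm{1}, and so the real task is to prove \Thm{1}. The plan is to track the (generalized) KL-divergence between a \emph{fixed} feasible target matrix and the running iterate. Because $A$ is $(\br,\bc)$-scalable there is a nonnegative matrix $P$ with $\sum_j P_{ij}=\br_i$, $\sum_i P_{ij}=\bc_j$, and $P_{ij}>0$ only where $A_{ij}>0$ (for instance the limiting scaling $RAS$). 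Define the potential $\Phi_t:=\sum_{ij}P_{ij}\ln(P_{ij}/A^{(t)}_{ij})$, which is finite and nonnegative since $P$ is supported on $\mathrm{supp}(A)$ (unchanged by scaling) and $\sum_{ij}P_{ij}=\sum_{ij}A^{(t)}_{ij}=h$. First I would show that a row update decreases $\Phi$ by exactly $D(\br\|\br^{(t)}):=\sum_i\br_i\ln(\br_i/\br^{(t)}_i)$: replacing $A^{(t)}_{ij}$ by $A^{(t)}_{ij}\br_i/\br^{(t)}_i$ changes $\Phi$ by $\sum_{ij}P_{ij}\ln(\br^{(t)}_i/\br_i)=\sum_i\br_i\ln(\br^{(t)}_i/\br_i)=-D(\br\|\br^{(t)})$, using $\sum_jP_{ij}=\br_i$; the column step is symmetric with $D(\bc\|\bc^{(t)})$. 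Telescoping over $T$ iterations and using $\Phi_T\ge0$ yields $\sum_{t<T}\big(D(\br\|\br^{(t)})+D(\bc\|\bc^{(t)})\big)\le\Phi_0$.

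Next I would bound the starting potential. Any feasible $P$ has $P_{ij}\le\min(\br_i,\bc_j)\le\rho$, while on the support $A^{(0)}_{ij}=A_{ij}\bc_j/\sum_kA_{kj}\ge\nu/\Delta$ (numerator at least $\nu\max_{k,l}A_{kl}$, denominator at most $\Delta\max_{k,l}A_{kl}$, and $\bc_j\ge1$). Hence every logarithm is at most $\ln(\rho\Delta/\nu)\ge0$ and $\Phi_0\le h\ln(\Delta\rho/\nu)=:H$. Consequently, among the $2T$ half-steps the smallest relevant divergence is at most $H/(2T)$, so $T=O(H/\delta)$ iterations suffice to produce an iterate — an $A_t$ if the minimum is attained on a row margin, or a $B_t$ if on a column margin — whose KL-divergence from the target margin is at most $\delta$.

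It then remains to choose $\delta$. For the $\ell_1$ bound I would normalize the margins to probability vectors and apply Pinsker's inequality, giving $\error_1^2\le 2h\,D(\br\|\br^{(t)})\le 2h\delta$; taking $\delta=\Theta(\eps^2/h)$ forces $\error_1\le\eps$ and gives $T=O(h^2\ln(\Delta\rho/\nu)/\eps^2)$, proving part~(1). For the $\ell_2$ bound the key is the strengthened Pinsker inequality \eqref{eq:pinskerp}, whose core is the scalar estimate $g(x,y):=x\ln(x/y)-x+y\ge (x-y)^2/(2\max(x,y))$ for $x,y>0$ (a short convexity computation); summing it gives $D(\br\|\br^{(t)})\ge\sum_i(\br_i-\br^{(t)}_i)^2/(2\max(\br_i,\br^{(t)}_i))$. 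I would then split coordinates. On those with $\br^{(t)}_i\le 2\rho$ we have $\max(\br_i,\br^{(t)}_i)\le 2\rho$, so their squared contribution to $\error_2^2$ is at most $4\rho\delta$. On those with $\br^{(t)}_i>2\rho\ge2\br_i$ we have $\br^{(t)}_i-\br_i\ge\br^{(t)}_i/2$, whence $g(\br_i,\br^{(t)}_i)\ge\br^{(t)}_i/8$ and $\sum_{\text{large}}\br^{(t)}_i\le 8\delta$; since $\ell_2\le\ell_1$ their contribution to $\error_2$ is at most $8\delta$. Combining, $\error_2\le 2\sqrt{\rho\delta}+8\delta$, so $\delta=\min(\eps^2/(16\rho),\eps/16)$ makes $\error_2\le\eps$ and gives $T=O(H/\delta)=O\big(\rho h\ln(\Delta\rho/\nu)(1/\eps+1/\eps^2)\big)$, proving part~(2).

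The main obstacle is precisely the $\ell_2$ analysis. A naive Pinsker-type bound with denominator $\max(\br_i,\br^{(t)}_i)$ is useless on coordinates where the current row-sum $\br^{(t)}_i$ is far larger than $\rho$ (it can be as large as $h$), so the clean bound $\error_2^2\le 2\rho\,D$ fails there. The two ingredients that overcome this — establishing the sharpened scalar inequality behind \eqref{eq:pinskerp}, and separately taming the heavy coordinates via $\ell_2\le\ell_1$, which is exactly what produces the extra $1/\eps$ term — are where the real work lies; I expect the scalar inequality itself to be the delicate step and the coordinate split to be routine bookkeeping.
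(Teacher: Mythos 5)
Your proposal is correct, and its skeleton coincides with the paper's: the corollary is obtained from Theorem~\ref{thm:1} exactly as you say (set $h=n$, $\rho=1$, so only Theorem~\ref{thm:1} needs proof), your potential $\Phi_t$ is $h$ times the paper's $\bD(Z,A^{(t)})$, your exact-decrease computation is Lemma~\ref{lem:2}, the telescoping/averaging step is the proof of Theorem~\ref{thm:main}, and Pinsker for the $\ell_1$ part is identical. You diverge in two places. First, you bound the initial potential by the direct entrywise estimates $P_{ij}\le\min(\br_i,\bc_j)\le\rho$ and $A^{(0)}_{ij}\ge\nu/\Delta$ on the support, giving $\Phi_0\le h\ln(\Delta\rho/\nu)$; the paper instead invokes a known bound $\bD_{KL}(p||q)\le\ln(1+2/q_{\min})$ column by column. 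Your route is more elementary and gives the same logarithm. Second, and more substantively, for the $\ell_2$ part the paper proves \eqref{eq:pinskerp} (the $\theta=1$ case of Lemma~\ref{lem:gen-pinsker}, established via the exponential bounds $1+t\le e^{a_\theta t}$ and $1+t\le e^{t-b_\theta t^2}$), splits coordinates at $q_i>2p_i$, and finishes by taking a minimum of an $\ell_1$-type and an $\ell_2$-type bound. You instead use the pointwise Bregman estimate $x\ln(x/y)-x+y\ge(x-y)^2/(2\max(x,y))$ --- which is correct: writing $y=(1+t)x$ it reduces to $t-\ln(1+t)\ge t^2/(2(1+t))$ for $t\ge0$ and $t-\ln(1+t)\ge t^2/2$ for $t\le 0$, both immediate from one derivative computation --- and split at $\br^{(t)}_i\le 2\rho$ versus $\br^{(t)}_i>2\rho$, taming the heavy coordinates via $\ell_2\le\ell_1$. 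The two decompositions are morally the same (large-ratio coordinates are controlled in $\ell_1$, the rest by a $\chi^2$-type term) and yield the same $O\left(\rho h\ln(\Delta\rho/\nu)\left(\frac{1}{\eps}+\frac{1}{\eps^2}\right)\right)$ bound; your scalar inequality is cleaner to state and verify, while the paper's Lemma~\ref{lem:gen-pinsker} is packaged as a standalone strengthening of Pinsker (with explicit constant $1-\ln 2$) that it then relates to the Hellinger bound.
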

\begin{remark}\label{rem:prev-best}
	To our knowledge, the $\ell_1$-error hasn't been explicitly studied in the literature (but see last paragraph of Section~\ref{sec:persp}), although for small $\eps \in (0,1)$ the same can be
	deduced from previous papers on matrix scaling~\cite{LSW00,GurvitsY,KLRS08,KK93}. One of our main motivations to look at $\ell_1$-error arose from the connections to perfect matchings in bipartite graphs as observed by Linial, Samorodnitsky and Wigderson~\cite{LSW00}.
	For the $\ell_2$ error, which is the better studied notion in the matrix scaling literature, the best analysis is due to Kalantari et al~\cite{KLRS,KLRS08}. They give a $\tilde{O}(\rho h^2/\eps^2)$ upper bound on the number of iterations for the general problem, and for the special case when $m=n$ and the square matrix has positive permanent (see ~\cite{KLRS}), they give a $\tilde{O}(\rho (h^2 - nh + n)/\eps^2)$ upper bound. Thus, for $(\bone,\bone)$-scaling, they get the same result as in Corollary~\ref{cor:1}.
	We get a quadratic improvement on $h$ in the general case, and we think our proof is more explicit and simpler.
\end{remark}
\begin{remark}
	Both parts of Theorem~\ref{thm:1} and Corollary~\ref{cor:1} are interesting in certain regimes of error. When the error $\eps$ is ``small'' (say, $\leq 1$) so that $1/\eps^2 \geq 1/\eps$,
	then statement 2 of Corollary~\ref{cor:1} implies statement 1 by Cauchy-Schwarz. 
	However, this breaks down when $\eps$ is ``large'' (say $\eps = \delta n$ for some constant $\delta>0$). In that case, statement 1 implies that in $O(\ln n/\delta^2)$ iterations, the $\ell_1$-error is $\leq \delta n$, but Statement 2 only implies that in $O(\ln n/\delta^2)$ iterations, the $\ell_2$ norm is $\leq \delta n$. This ``large $\ell_1$-error regime'' is of particular interest for an application to approximate matchings in bipartite graphs discussed below.
\end{remark}

%

\paragraph{Applications to Parallel Algorithms for Bipartite Perfect Matching.}

As a corollary, we get the following application, first pointed by Linial et al~\cite{LSW00}, to the existence of perfect matchings in bipartite graphs. Let $A$ be the adjacency matrix of a bipartite graph $G = (L\cup R, E)$ with $A_{ij} = 1$ iff $(i,j)\in E$. If $G$ has a perfect matching, then clearly there is a doubly stochastic matrix $X$ in the support of $A$. This suggests the algorithm of running the Sinkhorn-Knopp algorithm to $A$, and the following claim suggests when to stop. Note that each iteration can be run in $O(1)$ parallel time with $m$-processors where $m$ is the number of edges.
\begin{lemma}
	If we find a column (or row) stochastic matrix $Y$ in the support of $A$ such that $\error_1(Y) \leq n\eps$, then $G$ has a matching of size $\geq n(1-\eps)$.
\end{lemma}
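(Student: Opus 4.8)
The plan is to read the matrix $Y$ as an almost-feasible fractional matching of $G$ and then invoke the integrality of the bipartite matching polytope. Assume $Y$ is column-stochastic (the row-stochastic case is identical after transposing the picture). Let $r_i := \sum_j Y_{ij}$ denote the row-sums of $Y$. Since every column of $Y$ sums to $1$ and $Y$ is supported on the edges of $G$, the total mass is $\sum_i r_i = n$, and the hypothesis $\error_1(Y)\le n\eps$ says exactly that $\sum_i |r_i - 1| \le n\eps$.

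The first step I would carry out is the accounting identity coming from $\sum_i (r_i-1)=0$: the total overflow of the heavy rows equals the total deficit of the light rows, so
\[
\sum_{i:\, r_i>1} (r_i-1) \;=\; \frac12\sum_i |r_i-1| \;\le\; \frac{n\eps}{2}.
\]
This is the only quantitative input, and it is where the hypothesis gets used.

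Next I would convert $Y$ into a genuine fractional matching $x$ by capping each overloaded row: define $x_{ij} := Y_{ij}\cdot\min(1,\,1/r_i)$. By construction $x\ge 0$, it is still supported on the edges of $G$, each row-sum becomes $\min(r_i,1)\le 1$, and each column-sum can only decrease from $1$; hence $x$ lies in the fractional matching polytope of $G$. Its total value is $\sum_{ij}x_{ij}=\sum_i\min(r_i,1)=n-\sum_{i:\,r_i>1}(r_i-1)\ge n-\frac{n\eps}{2}$.

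Finally, because $G$ is bipartite the fractional matching polytope is integral --- equivalently, by K\"onig's theorem the maximum fractional matching value equals the size of a maximum matching. Therefore $G$ contains an integral matching of size at least the value of $x$, i.e.\ at least $n-\frac{n\eps}{2}\ge n(1-\eps)$, as claimed. I do not expect a serious obstacle here; the only point requiring care is that the value lost in the capping step is the overflow $\sum_{i:\,r_i>1}(r_i-1)$ rather than the full $\ell_1$-error, and it is precisely the identity $\sum_i(r_i-1)=0$ that lets us bound the former by half the latter.
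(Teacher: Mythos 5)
Your proof is correct, and it takes a genuinely different route from the paper's. The paper verifies the deficiency version of Hall's condition directly: for any $S\subseteq L$ it double-counts the mass $\sum_{i\in S,\, j\in \Gamma S} Y_{ij}$, bounding it below by $|S| - \error_1(Y) \geq |S| - n\eps$ using the row-sums and above by $|\Gamma S|$ using column-stochasticity, and then invokes the approximate (defect) Hall theorem. You instead cap the overloaded rows to produce an explicit point of the fractional matching polytope of value $n - \sum_{i: r_i>1}(r_i-1) \geq n - n\eps/2$, and invoke integrality of that polytope (equivalently K\"onig's theorem). The two classical inputs are of course equivalent in strength, but your accounting via $\sum_i (r_i - 1) = 0$ buys a factor of $2$: you get a matching of size at least $n(1-\eps/2)$ rather than $n(1-\eps)$, which is strictly stronger than what is claimed. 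The only cosmetic issue is the expression $\min(1, 1/r_i)$ when $r_i = 0$; in that case the whole row of $Y$ is zero, so you should just set $x_{ij}=0$ there (or adopt the convention $\min(1,1/0)=1$), and nothing else changes.
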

\begin{proof}
	Suppose $Y$ is column stochastic.
	Given $S\subseteq L$, consider $\sum_{i\in S,j\in \Gamma S} Y_{ij} = |S| + \sum_{i\in S} \left(\sum_{j=1}^n Y_{ij} - 1\right) \geq 
	|S| - \sum_{i=1}^n \Big|\sum_{j=1}^n Y_{ij} - 1\Big| \geq |S| - \error_1(Y) \geq |S| - n\eps$. On the other hand, $\sum_{i\in S,j\in \Gamma S} Y_{ij} \leq \sum_{j\in \Gamma S} \sum_{i=1}^n Y_{ij} = |\Gamma S|$. Therefore, for every $S\subseteq L$, $|\Gamma S| \geq |S| - n\eps$. The claim follows by approximate Hall's theorem.
\end{proof}
\noindent
%
\begin{corollary}[Fast Parallel Approximate Matchings]\label{thm:bip}
	Given a bipartite graph $G$ of max-degree $\Delta$ and an $\eps \in (0,1)$, $O(\ln \Delta/\eps^2)$-iterations of Sinkhorn-Knopp algorithm suffice to distinguish between the case when $G$ has a perfect matching and the case when the largest matching in $G$ has size at most $n(1-\eps)$.
\end{corollary}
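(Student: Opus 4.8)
The plan is to run the Sinkhorn-Knopp iteration on the $0/1$ adjacency matrix $A$ with targets $\br\equiv\bc\equiv\bone_n$, and to combine the first part of \Thm{1} with the Lemma above. First I would record how the parameters collapse in this instance: $h=n$, $\rho=\max(\max_i\br_i,\max_j\bc_j)=1$, and $\nu=1$ since every nonzero entry of $A$ equals $1$; hence $\ln(\Delta\rho/\nu)=\ln\Delta$. I would also observe that, by construction, each $A^{(t)}$ is exactly column-stochastic and each $B^{(t)}$ is exactly row-stochastic, and that the iteration only rescales nonzero entries, so every $A^{(t)}$ and $B^{(t)}$ has the same support as $A$. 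Thus each iterate is a legitimate candidate $Y$ for the Lemma, \emph{regardless of whether $A$ is scalable}.

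The algorithm is to run $T=O(\ln\Delta/\eps^2)$ iterations and to accept (declare that $G$ has a perfect matching) if and only if some iterate $A_t$ or $B_t$ satisfies $\error_1\le n\eps/2$. The hidden constant in $T$ is fixed by invoking the first part of \Thm{1} with error parameter $n\eps/2$: substituting the parameters above gives $T=O\!\left(\frac{h^2\ln(\Delta\rho/\nu)}{(n\eps/2)^2}\right)=O\!\left(\frac{n^2\ln\Delta}{n^2\eps^2}\right)=O(\ln\Delta/\eps^2)$, which is exactly the claimed iteration count.

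For the perfect-matching case I would argue completeness: a perfect matching is a positive diagonal of $A$, so $A$ has support and is therefore $(\bone,\bone)$-scalable, and \Thm{1} applies. By the choice of $T$, within $T$ iterations some $A_t$ or $B_t$ attains $\error_1\le n\eps/2$, so the test accepts. For the case where the largest matching has size at most $n(1-\eps)$ I would argue soundness by contradiction: if some iterate achieved $\error_1\le n\eps/2$, then the Lemma (applied with error parameter $\eps/2$) would produce a matching of size $\ge n(1-\eps/2)>n(1-\eps)$, contradicting the promise. Hence no iterate crosses the threshold and the test rejects. The deliberate use of $n\eps/2$ rather than $n\eps$ is precisely what opens a strict gap between $n(1-\eps/2)$ and $n(1-\eps)$ and thereby separates the two cases.

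The step I expect to require the most care is the interplay between scalability and the decision rule. \Thm{1} guarantees a drop in the error only in the scalable (perfect-matching) case, so it drives completeness but gives no information in the no-perfect-matching case; soundness must therefore rest entirely on the Lemma. This is why it is essential that every $A_t$ and $B_t$ is exactly column- or row-stochastic and supported on $A$ by construction, so that the Lemma's hypotheses hold verbatim at every iteration irrespective of convergence. A smaller point to verify is that no division by zero occurs during the iteration, i.e.\ that $A$ has no all-zero row or column; an isolated vertex already precludes a perfect matching and places us in the reject case, so we may assume $G$ has minimum degree at least one.
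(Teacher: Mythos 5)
Your proposal is correct and follows the route the paper intends (the paper leaves this corollary's proof implicit): instantiate Theorem~\ref{thm:1}, part 1, in the large-error regime $\ell_1$-error $= \Theta(n\eps)$ with $h=n$, $\rho=\nu=1$, and combine with the preceding Lemma to certify a large matching, with scalability in the completeness case witnessed by the permutation matrix of a perfect matching. Your extra care --- the factor-of-$2$ slack to open a strict gap at the boundary $n(1-\eps)$, the observation that every iterate is exactly row- or column-stochastic and supported on $A$ so the Lemma applies unconditionally, and the isolated-vertex caveat --- is welcome but does not change the argument.
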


Thus the approximate perfect matching problem in bipartite graphs is in NC for $\eps$ as small as polylogarithmic in $n$. 
This is not a new result and can indeed be obtained from the works on parallel algorithms for packing-covering LPs~\cite{LubyNisan, Young01, AllenZhuOrecchia15,RaoICALP16}, but the Sinkhorn-Knopp algorithm is arguably simpler.

\subsection{Perspective}\label{sec:persp}
As mentioned above, the matrix scaling problem and in particular the Sinkhorn-Knopp algorithm has been extensively studied over the past 50 years. We refer the reader to Idel's survey~\cite{Idel} and the references within for a broader perspective; in this subsection we mention the most relevant works. 

We have already discussed the previously best  known, in their dependence on $h$, analysis for the Sinkhorn-Knopp algorithm in Remark~\ref{rem:prev-best}. For the special case of {\em strictly positive} matrices, better rates are known.
Kalantari and Khachiyan~\cite{KK93}
showed that for positive matrices and the $(\bone,\bone)$-scaling problem, the Sinkhorn-Knopp algorithm obtains $\ell_2$ error $\leq \eps$ in $O(\sqrt{n}\ln(1/\nu)/\eps)$-iterations; this result was extended to the general matrix scaling problem by Kalantari et al~\cite{KLRS08}. In a different track, Franklin and Lorenz~\cite{FranklinLorenz} show that in fact the dependence on $\eps$ can be made 
logarithmic, and thus the algorithm has ``linear convergence'', however their analysis\footnote{~\cite{FranklinLorenz} never make the base of the logarithm explicit, but their proof shows it can be as large as $1 - 1/\nu^2$.} has a polynomial dependence of $(1/\nu)$.  All these results use the positivity crucially and seem to break down even with one $0$ entry.

%

The Sinkhorn-Knopp algorithm has polynomial dependence on the error parameter and therefore is a ``pseudopolynomial'' time approximation. 
We conclude by briefly describing bounds obtained by other algorithms for the matrix scaling problem whose dependence on $\eps$ is logarithmic rather than polynomial. Kalantari and Khachiyan~\cite{KalantariKhachiyan} describe a method based on the ellipsoid algorithm which runs in time $O(n^4 \ln(n/\eps)\ln(1/\nu))$. Nemirovskii and Rothblum~\cite{NemRoth} describe a method with running time $O(n^4 \ln(n/\eps)\ln\ln(1/\nu))$.
The first strongly polynomial time approximation scheme (with no dependence on $\nu$) was due to Linial, Samoridnitsky, and Wigderson~\cite{LSW00} who gave a $\tilde{O}(n^7\ln(h/\eps))$ time algorithm. 
Rote and Zachariasen~\cite{RoteZ} reduced the matrix scaling problem to flow problems to give a $O(n^4\ln(h/\eps))$ time algorithms for the matrix scaling problem. To compare, we should recall that Theorem~\ref{thm:1} shows that our algorithm runs in time $O(\nnz(A)h^2/\eps^2)$ time.

Very recently, two independent works obtain vastly improved running times for matrix scaling. Cohen et al~\cite{CohenFOCS} give $\tilde{O}(\nnz(A)^{3/2})$ time algorithm, while Allen-Zhu et al~\cite{AllenzhuFOCS} give a $\tilde{O}(n^{7/3} + \nnz(A)\cdot(n + n^{1/3}h^{1/2}))$ time algorithm; the tildes in both the above running times hide the logarithmic dependence on $\eps$ and $\nu$. Both these algorithms look at the matrix scaling problem as a convex optimization problem and perform second order methods. 
After the first version of this paper was made public, we were pointed out another recent paper by Altschuler, Weed and Rigollet~\cite{AWR-NIPS17}
who also study the $\ell_1$-error and obtain the same result as part 1 of our Theorem. Indeed their proof techniques are very similar to what we use to prove part 1.

\section{Entropy Minimization Viewpoint of the Sinkhorn-Knopp Algorithm}
There have been many approaches (see Idel~\cite{Idel}, Section 3 for a discussion) towards analyzing the Sinkhorn-Knopp algorithm including convex optimization and log-barrier methods~\cite{KK93, KLRS08, Macgill, BalakrishnanHT04}, non-linear Perron-Frobenius theory~\cite{Menon, Soules, FranklinLorenz, BrualdiPS, KK93}, topological methods~\cite{Raghavan, Bapat-Raghavan}, connections to the permanent~\cite{LSW00, KLRS}, and the entropy minimization method~\cite{Bregman, Csiszar1,Csiszar2,GurvitsY} which is what we use for our analysis.

We briefly describe the entropy minimization viewpoint. Given two non-negative matrices $M$ and $N$ let us define the {\em Kullback-Leibler} divergence\footnote{The KL-divergence is normally stated between two distributions and doesn't have the $1/h$ factor. Also the logarithms are usually base $2$.} between $M$ and $N$ as follows
\begin{equation}\label{eq:KL}
\bD(M,N) := \frac{1}{\total} \sum_{1\leq i\leq n}~\sum_{1\leq j\leq m} M_{ij} \ln\left(\frac{M_{ij}}{N_{ij}}\right)
\end{equation}
with the convention that the summand is zero if both $M_{ij}$ and $N_{ij}$ are $0$, and is $\infty$ if $M_{ij} > 0$ and $N_{ij} = 0$.
Let $\Phi_r$ be the set of $n\times m$ matrices whose row-sums are $\br$ and let $\Phi_c$ be the set of $n\times m$ matrices whose column sums are $\bc$. Given matrix $A$ suppose we wish to find the matrix $A^* = \arg \min_{B\in \Phi_r \cap \Phi_c} \bD(B,A)$. One algorithm for this is to use the method of alternate projections with respect to the KL-divergence~\cite{Bregman} (also known as $I$-projections~\cite{Csiszar1}) which alternately finds the matrices in $\Phi_r$ and $\Phi_c$ closest in the KL-divergence sense to the current matrix at hand, and then sets the minimizer to be the current matrix. It is not too hard to see (see Idel~\cite{Idel}, Observation 3.17 for a proof) that the above alternate projection algorithm is precisely the Sinkhorn-Knopp algorithm. 
Therefore, at least in this sense, the right metric to measure the distance to optimality is not the $\ell_1$ or the $\ell_2$ error as described in the previous section, but the rather the KL-divergence between the normalized vectors as described below.

Let $\piR^{(t)} := \br^{(t)}/\total$ be the $n$-dimensional probability vector whose $i$th entry is $\br^{(t)}_i/\total$; similarly define the $m$-dimensional vector $\piC^{(t)}$. 
Let $\boneR$ denote the $n$-dimensional probability vector with the $i$th entry being $\br_i/\total$; similarly define $\boneC$.  Recall that the KL-divergence between two probability distributions $p,q$ is defined as $\bD_{KL}(p||q) := \sum_{i=1}^n p_i\ln(q_i/p_i)$. The following theorem gives the convergence time for the KL-divergence.
\begin{theorem}\label{thm:main}
	If the matrix $A\in \mathbb{R}^{n\times m}_{\geq 0}$ is $(\br,\bc)$-scalable, then for any $\delta> 0$ there is a 
	$t\leq T= \ceil{\left(\frac{\ln(1+2\Delta\rho/\nu)}{\delta}\right)}$ with either $\bD_{KL}(\boneR||\piR^{(t)}) \leq \delta$ or $\bD_{KL}(\boneC||\piC^{(t)}) \leq \delta$. 
	Recall,  $\rho = \max(\max_i \br_i, \max_j \bc_j)$, $\nu = \frac{\min_{i,j:A_{ij}>0} A_{ij}}{\max_{i,j} A_{ij}}$, and $\Delta = \max_j |\{i: A_{ij} > 0\}|$ is the maximum number of non-zeros in any column of $A$.
\end{theorem}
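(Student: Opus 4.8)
The plan is to run a one-line potential argument with the Kullback--Leibler potential $\bD(A^*,A^{(t)})$, where $A^* = RAS \in \Phi_r \cap \Phi_c$ is the \emph{exact} $(\br,\bc)$-scaling of $A$; this exists by the scalability hypothesis and shares the support of $A$ (since $R,S$ are positive diagonal). The conceptual engine is the observation already set up in this section, that each Sinkhorn--Knopp half-step is an $I$-projection onto a linear family. First I would confirm this at the level of formulas: the optimality conditions for $\min_{P\in\Phi_r}\bD(P,A^{(t)})$ force $P_{ij}=A^{(t)}_{ij}\br_i/\br^{(t)}_i$, which is exactly $B^{(t)}$, so $B^{(t)}$ is the $I$-projection of $A^{(t)}$ onto $\Phi_r$; symmetrically $A^{(t+1)}$ is the $I$-projection of $B^{(t)}$ onto $\Phi_c$.

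The first real step is the Pythagorean identity for these projections. Since $A^*,B^{(t)}\in\Phi_r$ share the row sums $\br$, the cross term vanishes: expanding $\bD(A^*,A^{(t)})-\bD(A^*,B^{(t)})-\bD(B^{(t)},A^{(t)})$ and using that $\ln(B^{(t)}_{ij}/A^{(t)}_{ij})=\ln(\br_i/\br^{(t)}_i)$ depends only on $i$, everything collapses to $\frac1h\sum_i \ln(\br_i/\br^{(t)}_i)\sum_j(A^*_{ij}-B^{(t)}_{ij})=0$. This yields
\[
\bD(A^*,A^{(t)}) = \bD(A^*,B^{(t)}) + \bD(B^{(t)},A^{(t)}),
\]
and the identical column computation gives $\bD(A^*,B^{(t)}) = \bD(A^*,A^{(t+1)}) + \bD(A^{(t+1)},B^{(t)})$. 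Crucially, the two ``drop'' terms are exactly the marginal divergences: using $\sum_j B^{(t)}_{ij}=\br_i$ one gets $\bD(B^{(t)},A^{(t)})=\frac1h\sum_i\br_i\ln(\br_i/\br^{(t)}_i)=\bD_{KL}(\boneR\|\piR^{(t)})$, and symmetrically $\bD(A^{(t+1)},B^{(t)})=\bD_{KL}(\boneC\|\piC^{(t)})$.

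Chaining the two identities gives the clean per-iteration decrease $\bD(A^*,A^{(t)})-\bD(A^*,A^{(t+1)}) = \bD_{KL}(\boneR\|\piR^{(t)}) + \bD_{KL}(\boneC\|\piC^{(t)})$. Telescoping over $t=0,\dots,T-1$ and discarding the nonnegative quantities $\bD_{KL}(\boneC\|\piC^{(t)})$ and $\bD(A^*,A^{(T)})$ leaves $\sum_{t=0}^{T-1}\bD_{KL}(\boneR\|\piR^{(t)}) \le \bD(A^*,A^{(0)})$. Then I would bound the initial potential by crude entry estimates: every entry obeys $A^*_{ij}\le\br_i\le\rho$, while the column-normalized $A^{(0)}$ has nonzero entries $A^{(0)}_{ij}=A_{ij}\bc_j/(\sum_{i'}A_{i'j})\ge \nu/\Delta$ (using $\bc_j\ge1$ and at most $\Delta$ nonzeros per column). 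Hence $\ln(A^*_{ij}/A^{(0)}_{ij})\le\ln(\rho\Delta/\nu)$ on the support, and since $\frac1h\sum_{ij}A^*_{ij}=1$ this gives $\bD(A^*,A^{(0)})\le\ln(\rho\Delta/\nu)\le\ln(1+2\Delta\rho/\nu)$. Finally, if all $T$ row-divergences $\bD_{KL}(\boneR\|\piR^{(t)})$ for $t=0,\dots,T-1$ exceeded $\delta$, their sum would exceed $T\delta\ge\ln(1+2\Delta\rho/\nu)$, a contradiction; so some $t\le T-1$ has $\bD_{KL}(\boneR\|\piR^{(t)})\le\delta$, which in particular gives the claimed either/or.

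I expect the main obstacle to be the Pythagorean step: one must certify that $\Phi_r,\Phi_c$ are linear families and that the half-steps are genuinely their $I$-projections, so that the cross term is \emph{exactly} zero rather than merely signed. Once that identity is secured, the remainder is bookkeeping --- the drop terms are pure relative-entropy expressions in the row/column marginals, and telescoping plus the entry-wise bound on $\bD(A^*,A^{(0)})$ finishes the proof. A secondary point needing care is that $A^*$ exists and has the same zero pattern as $A$, which guarantees the ratios $A^*_{ij}/A^{(0)}_{ij}$ are finite on the support; this is precisely where the scalability hypothesis is invoked.
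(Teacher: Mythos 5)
Your proposal is correct and follows essentially the same route as the paper: the same potential $\bD(Z,A^{(t)})$ with $Z=RAS$, the same identification of the per-half-step drop with $\bD_{KL}(\boneR\|\piR^{(t)})$ (the paper computes the difference directly in its Lemma~\ref{lem:2}, which is exactly your Pythagorean cross-term calculation), and the same telescoping contradiction. The only cosmetic difference is the bound on the initial potential, where you use the entrywise estimate $A^*_{ij}/A^{(0)}_{ij}\le \rho\Delta/\nu$ instead of the paper's column-by-column appeal to $D(p\|q)\le\ln(1+\|p-q\|_2^2/q_{\min})$; both yield the stated $\ln(1+2\Delta\rho/\nu)$.
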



\begin{proof}
	Let $Z := RAS$ be a matrix with row-sums $\br$ and column-sums $\bc$ for diagonal matrices $R,S$.
	Recall $A^{0}$ is the matrix obtained by column-scaling $A$. Note that the minimum non-zero entry of $A^{0}$ is $\geq \nu/\Delta$.
	\begin{lemma}\label{lem:1}
		$\bD(Z,A^{0}) \leq \ln(1+ 2\Delta\rho/\nu)$ and $\bD(Z,A^{t}) \geq 0$ for all $t$.
	\end{lemma}
	\begin{proof}
		By definition,
		\[
		\bD(Z,A^{(t)}) = \frac{1}{\total}\sum_{j=1}^m \sum_{i=1}^n Z_{ij}\ln\left(\frac{Z_{ij}}{A^{(t)}_{ij}}\right) = \frac{1}{\total}\sum_{j=1}^m \bc_j \sum_{i=1}^n \frac{Z_{ij}}{\bc_j}\ln\left(\frac{Z_{ij}}{A^{(t)}_{ij}}\right)
		\]
		For a fixed $j$, the vectors $\left(\frac{Z_{1j}}{\bc_j},\frac{Z_{2j}}{\bc_j},\ldots,\frac{Z_{nj}}{\bc_j}\right)$ and $\left(\frac{A^{(t)}_{1j}}{\bc_j},\frac{A^{(t)}_{2j}}{\bc_j},\ldots,\frac{A^{(t)}_{nj}}{\bc_j}\right)$ are probability vectors,
		and therefore the above is a sum of $\bc_j$-weighted KL-divergences which is always non-negative. 
		For the upper bound, one can use the fact (Inequality 27, \cite{jeju2015}) that for any two distributions $p$ and $q$, 
		$D(p||q) \leq \ln(1 + \frac{||p-q||^2_2}{q_\textrm{min}}) \leq \ln(1 + \frac{2}{q_\textrm{min}}) 	$ where $q_\textrm{min}$ is the smallest non-zero entry of $q$.
		For our purpose, we note that the minimum non-zero probability of the $A^{(0)}_j$ distribution being $\geq \nu/\Delta\rho$. Therefore, the second summand is at most $\ln(1+2\Delta\rho/\nu)$ giving us
		$D(Z,A^{(0)}) \leq \frac{1}{\total}\sum_{j=1}^m \bc_j \cdot \ln(1+2\Delta\rho/\nu)  = \ln(1+2\Delta\rho/\nu)$.
	\end{proof}
	
	\begin{lemma}\label{lem:2}
		\[
		\bD(Z,A^{(t)}) - \bD(Z,B^{(t)}) = \bD_{KL}(\boneR||\piR^{(t)}) ~~~ \textrm{and} ~~~ \bD(Z,B^{(t)}) - \bD(Z,A^{(t+1)}) = \bD_{KL}(\boneC||\piC^{(t)})
		\]
	\end{lemma}
	\begin{proof}
		The LHS of the first equality is simply 
		\begin{align}
		\frac{1}{\total} \sum_{j=1}^m \sum_{i=1}^n Z_{ij} \ln\left(\frac{B^{(t)}_{ij}}{A^{(t)}_{ij}}\right) && = && \frac{1}{\total} \sum_{j=1}^m \sum_{i=1}^n Z_{ij} \ln \left(\frac{\br_i}{\br^{(t)}_i}\right)\notag \\
		&& = && \frac{1}{\total} \sum_{i=1}^n \ln \left(\frac{\br_i}{\br^{(t)}_i}\right) \sum_{j=1}^m Z_{ij} \notag\\
		&& = && \sum_{i=1}^n \left(\frac{\br_i}{\total}\right)\cdot \ln \left(\frac{\br_i/\total}{\br^{(t)}_i/\total}\right)\notag
		\end{align}
		since $\sum_{j=1}^m Z_{ij} = \br_i$. The last summand is precisely $\bD_{KL}(\boneR||\piR^{(t)})$. The other equation follows analogously.
	\end{proof}
	\noindent
	The above two lemmas easily imply the theorem. 
	If for all $0\leq t\leq T$, both $\bD_{KL}(\boneR||\piR^{(t)})>\delta$ and $\bD_{KL}(\boneC||\piC^{(t)})>\delta$, then substituting in Lemma~\ref{lem:2} and summing we get
	$
	\bD(Z,A^{(0)}) - \bD(Z,A^{(T+1)}) > T\delta > \ln(1+2\Delta\rho/\nu)$
	contradicting Lemma~\ref{lem:1}. 
\end{proof}

Theorem~\ref{thm:1} follows from Theorem~\ref{thm:main} using connections between the KL-divergence and the $\ell_1$ and $\ell_2$ norms. One is the following famous Pinsker's inequality which allows us to easily prove part 1 of Theorem~\ref{thm:1}.
Given any two probability distributions $p,q$,
\begin{equation}\label{eq:pinsker}
\bD_{KL}(p||q) \geq \frac{1}{2}\cdot ||p-q||_1^2 \tag{\bf Pinsker}
\end{equation}
\begin{proof}[{\bf Proof of Theorem~\ref{thm:1}, Part 1}]
	
	Apply \eqref{eq:pinsker} on the vectors $\piR$ and $\piR^{(t)}$ to get
	\[
	\bD_{KL}(\piR||\piR^{(t)}) \geq \frac{1}{2h^2}||\br^{(t)} - \br||^2_1
	\]
	Set $\delta := \frac{\eps^2}{2h^2}$ and apply Theorem~\ref{thm:main}. In $O\left(\frac{h^2\ln(\Delta\rho/\nu)}{\eps^2}\right)$ time we would get a matrix with 
	$\delta > \bD_{KL}(\piR||\piR^{(t)})$ which from the above inequality would imply $||\br^{(t)} - \br||_1 \leq \eps$.
\end{proof}

To prove Part 2, we need a way to relate the $\ell_2$ norm and the KL-divergence. 
In order to do so, we prove a different lower bound which implies Pinsker's inequality (with a worse constant), but is significantly stronger in certain regimes. This may be of independent interest in other domains. Below we state the version
which we need for the proof of Theorem~\ref{thm:1}, part 2. This is an instantiation of the general inequality Lemma~\ref{lem:gen-pinsker} whcih we prove in Section~\ref{sec:kl}.
\begin{lemma}\label{lem:kl}
	Given any pair of probability distributions $p,q$ over a finite domain, define $\cA := \{i: q_i > 2p_i\}$ and $\cB := \{i: q_i\leq 2p_i \}$.
	Then, 
	\begin{equation}\label{eq:pinskerp}
	\bD_{KL}(p||q) \geq (1-\ln 2)\cdot \left( \sum_{i\in \cA} |q_i - p_i| + \sum_{i\in \cB}\frac{(q_i-p_i)^2}{p_i} \right)\tag{\bf KL vs $\ell_1/\ell_2$}
	\end{equation}
\end{lemma}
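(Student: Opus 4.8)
The plan is to reduce \eqref{eq:pinskerp} to two elementary one-variable estimates applied coordinate by coordinate. The starting point is the exact identity, valid because $\sum_i(q_i-p_i)=0$ for probability distributions,
\[
\bD_{KL}(p\|q) \;=\; \sum_i p_i\ln\frac{p_i}{q_i} \;=\; \sum_i\Bigl(p_i\ln\frac{p_i}{q_i}+q_i-p_i\Bigr) \;=\; \sum_{i:p_i>0} p_i\,\varphi\!\left(\tfrac{q_i}{p_i}\right),
\]
where $\varphi(r):=r-1-\ln r\ge 0$. Writing $r_i:=q_i/p_i$, the set $\cA$ is exactly $\{i:r_i>2\}$ and $\cB=\{i:r_i\le 2\}$, and a short computation gives $|q_i-p_i|=p_i(r_i-1)$ on $\cA$ (where $r_i>1$) and $(q_i-p_i)^2/p_i=p_i(r_i-1)^2$ on $\cB$. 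Hence it suffices to prove the two scalar bounds
\[
\varphi(r)\ge (1-\ln 2)(r-1)\ \ (r>2), \qquad \varphi(r)\ge (1-\ln 2)(r-1)^2\ \ (0<r\le 2),
\]
multiply each by $p_i$, and sum term by term. The degenerate coordinates are harmless: if $p_i=0$ the summand equals $q_i\ge(1-\ln2)q_i$, and if $q_i=0<p_i$ the left side is infinite.

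The first (linear) bound is immediate: set $G(r):=\varphi(r)-(1-\ln2)(r-1)=\ln 2\cdot(r-1)-\ln r$, note $G(2)=0$, and observe $G'(r)=\ln 2-1/r>0$ for $r>2$, so $G$ is increasing and nonnegative there. The second (quadratic) bound is the main obstacle, since $\varphi(r)/(r-1)^2$ is not controlled by a single monotone derivative. I would instead study $F(r):=\varphi(r)-(1-\ln2)(r-1)^2$, which satisfies $F(1)=F(2)=0$ and $F'(1)=0$. Its second derivative $F''(r)=1/r^2-2(1-\ln2)$ is strictly decreasing with $F''(1)=2\ln 2-1>0$ and $F''(2)=\tfrac14-2(1-\ln2)<0$, so $F''$ has a unique zero $r_0\in(1,2)$, making $F$ convex on $(0,r_0)$ and concave on $(r_0,\infty)$. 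On the convex part $r=1$ is the only interior stationary point, so it is the minimum and $F\ge F(1)=0$ there, giving in particular $F(r_0)\ge 0$; on the concave part $[r_0,2]$, $F$ dominates the chord joining $(r_0,F(r_0))$ to $(2,0)$, which is nonnegative, so $F\ge 0$ up to $r=2$. This is precisely where the constant $1-\ln 2=\varphi(2)$ and the threshold $2$ are pinned down: both scalar inequalities are tight at $r=2$, which is what makes the $\cA/\cB$ split consistent.

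Finally I would assemble the pieces: applying the linear bound on each $i\in\cA$ and the quadratic bound on each $i\in\cB$, scaling by $p_i$, and summing yields
\[
\bD_{KL}(p\|q)=\sum_{i\in\cA}p_i\varphi(r_i)+\sum_{i\in\cB}p_i\varphi(r_i)\ge (1-\ln2)\Bigl(\sum_{i\in\cA}|q_i-p_i|+\sum_{i\in\cB}\frac{(q_i-p_i)^2}{p_i}\Bigr),
\]
which is exactly \eqref{eq:pinskerp}. The only real work is the convexity/concavity analysis of $F$ in the second paragraph; everything else is bookkeeping around the identity in the first.
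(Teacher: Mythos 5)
Your proof is correct, and at the structural level it coincides with the paper's: the paper likewise bounds the KL-divergence term by term after splitting into $\cA$ and $\cB$, uses $\sum_i(q_i-p_i)=0$, and reduces to exactly your two scalar inequalities --- writing $\eta=r-1$, your linear bound $\varphi(r)\ge(1-\ln 2)(r-1)$ is the paper's $\ln(1+\eta)\le \eta\ln 2$ for $\eta\ge 1$, and your quadratic bound $\varphi(r)\ge(1-\ln 2)(r-1)^2$ is the paper's $\ln(1+\eta)\le \eta-(1-\ln 2)\eta^2$ for $\eta\le 1$, i.e.\ the $\theta=1$ case of its Lemma~\ref{fact:1}. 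Where you genuinely diverge is in verifying the quadratic scalar bound, which is where all the work lies: the paper studies $g(t)=e^{t(1-b t)}-(1+t)$ and excludes an interior zero by counting how many times $g$, $g'$, $g''$ would have to vanish, whereas you work directly with $F(r)=\varphi(r)-(1-\ln 2)(r-1)^2$, locate its unique inflection point $r_0\in(1,2)$, and combine convexity on $(0,r_0]$ (where $F'(1)=0$ forces $F\ge F(1)=0$) with concavity on $[r_0,2]$ (where $F$ dominates the nonnegative chord to $(2,0)$). Your argument is shorter and avoids the zero-counting, but it is pinned to the threshold $2$; the paper's version establishes the inequality for every $\theta>0$ (Lemma~\ref{lem:gen-pinsker}), of which the stated lemma is the instance $\theta=1$. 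One cosmetic point: your displayed chain of equalities ends with a sum restricted to $\{i:p_i>0\}$, which silently drops coordinates with $p_i=0<q_i$ whose contribution is $q_i$; you do handle these in the following sentence, but the display itself should either retain those terms or be written as an inequality.
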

\begin{proof}[{\bf Proof of Theorem~\ref{thm:1}, Part 2}]
	We apply Lemma~\ref{lem:kl} on the vectors $\piR$ and $\piR^{(t)}$.
	Lemma~\ref{lem:kl} gives us
	
	\begin{align}
	\bD_{KL}(\piR||\piR^{(t)}) && \geq && C\cdot \left(\frac{1}{h}\sum_{i\in A} |\br^{(t)}_i - \br_i| ~+~ \frac{1}{h}\sum_{i\in B} \frac{(\br^{(t)}_i - \br_i)^2}{\br_i}  \right) \notag\\ && \geq && \frac{C}{h} \left( \sum_{i\in A} |\br^{(t)}_i - \br_i| ~+~ \frac{1}{\rho} \sum_{i\in B}(\br^{(t)}_i - \br_i)^2 \right) \notag
	\end{align}
	where $C = 1-\ln 2$.
	If the second summand in the parenthesis of the RHS is $\geq \frac{1}{2}||\br^{(t)} - \br||^2_2$, then we get 
	$\bD_{KL}(\piR||\piR^{(t)}) \geq \frac{C}{2\rho h} ||\br^{(t)} - \br||^2_2$.
	Otherwise, we have
	$\bD_{KL}(\piR||\piR^{(t)}) \geq \frac{C}{\sqrt{2}h}||\br^{(t)} - \br||_2$, where we used the weak fact that the sum of some positive numbers is at least the square-root of the sum of their squares. In any case, we get the following
	\begin{equation}\label{eq:min}
	\bD_{KL}(\piR||\piR^{(t)}) \geq \min\left(\frac{C}{2\rho h} ||\br^{(t)} - \br||^2_2,\frac{C}{\sqrt{2}h}||\br^{(t)} - \br||_2 \right)
	\end{equation}
	To complete the proof of part 2 of Theorem~\ref{thm:1}, set $\delta := \frac{C}{2\rho h \left(\frac{1}{\eps} + \frac{1}{\eps^2}\right)}$ 
	and apply Theorem~\ref{thm:main}. In $O\left(\rho h \ln\left(\Delta\rho/\nu\right) \cdot \left(\frac{1}{\eps} + \frac{1}{\eps^2}\right)\right)$ time we would get a matrix with 
	$\delta \geq \bD_{KL}(\piR||\piR^{(t)})$. 
	If the minimum of the RHS of \eqref{eq:min} is the first term, then we get $||\br^{(t)}-\br||^2_2 \leq \eps^2$ implying the $\ell_2$-error is $\leq \eps$. If the minimum is the second term, then we get $||\br^{(t)}-\br||_2 \leq \frac{\eps}{\sqrt{2}\rho} < \eps$ since $\rho\geq 1$.
\end{proof}

\section{New Lower Bound on the KL-Divergence}\label{sec:kl}

We now establish a new lower bound on KL-divergence which yields \eqref{eq:pinskerp} as a corollary.
\begin{lemma}\label{lem:gen-pinsker}
	Let $p$ and $q$ be two distributions over a finite $n$-element universe. For any fixed $\theta > 0$, define the sets
	$\cA_\theta := \{i\in [n]: q_i\geq (1+\theta)p_i\}$ and $\cB_\theta = [n]\setminus \cA_\theta = \{i\in [n]: q_i \leq (1+\theta)p_i\}$. 
	Then we have the following inequality
	\begin{equation}\label{eq:gen-pinsker}
	\bD_{KL}(p||q)  \geq \left(1 - \frac{\ln(1+\theta)}{\theta}\right)\cdot \left(\sum_{i\in \cA_\theta} |q_i - p_i| + \frac{1}{\theta} \sum_{i\in \cB_\theta} p_i\left(\frac{q_i - p_i}{p_i}\right)^2 \right) 
	\end{equation}
\end{lemma}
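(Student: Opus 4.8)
The plan is to prove the inequality \eqref{eq:gen-pinsker} by splitting the KL-divergence into two sums, one over $\cA_\theta$ and one over $\cB_\theta$, and lower-bounding each sum separately by the corresponding term on the right-hand side. Writing $\bD_{KL}(p||q) = \sum_i p_i \ln(p_i/q_i) = -\sum_i p_i \ln(q_i/p_i)$, I would introduce the per-coordinate ratio $x_i := q_i/p_i$ (so $x_i \geq 1+\theta$ on $\cA_\theta$ and $x_i \leq 1+\theta$ on $\cB_\theta$) and try to establish a pointwise bound on the scalar function $-\ln(x)$ in each regime that, after multiplying by $p_i$ and summing, produces the claimed form. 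The delicate point is that KL-divergence is \emph{not} a sum of individually nonnegative terms: coordinates with $q_i > p_i$ contribute negatively to $-\sum_i p_i \ln(q_i/p_i)$, so I cannot simply bound term-by-term and hope the signs cooperate. The usual trick is to exploit the normalization constraint $\sum_i (q_i - p_i) = 0$ to cancel first-order (linear) contributions.

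Concretely, I would use the constraint to add a free linear term: for any constant the quantity $\sum_i (q_i - p_i)$ vanishes, so I can write
\begin{equation*}
\bD_{KL}(p||q) = \sum_i p_i\left(-\ln x_i\right) = \sum_i p_i\left(-\ln x_i + (x_i - 1)\right) = \sum_i p_i \cdot g(x_i),
\end{equation*}
where $g(x) := x - 1 - \ln x \geq 0$ for all $x > 0$, with equality at $x=1$. This recasts the divergence as a genuine sum of nonnegative terms, so now I can bound each region independently. On $\cA_\theta$ (where $x_i \geq 1+\theta$) I want to show $p_i g(x_i) \geq \bigl(1 - \tfrac{\ln(1+\theta)}{\theta}\bigr)|q_i - p_i| = \bigl(1 - \tfrac{\ln(1+\theta)}{\theta}\bigr)\,p_i(x_i-1)$; dividing by $p_i$, this is the scalar inequality $g(x) \geq c\,(x-1)$ for $x \geq 1+\theta$, with $c = 1 - \tfrac{\ln(1+\theta)}{\theta}$. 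On $\cB_\theta$ (where $x_i \leq 1+\theta$) I want $p_i g(x_i) \geq \tfrac{c}{\theta}\,p_i(x_i - 1)^2$, i.e. the scalar inequality $g(x) \geq \tfrac{c}{\theta}(x-1)^2$ for $0 < x \leq 1+\theta$.

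The main obstacle is verifying these two one-variable inequalities, and this is where the specific constant $c = 1 - \ln(1+\theta)/\theta$ must emerge rather than be guessed. For $\cA_\theta$, I would consider $\phi(x) := g(x) - c(x-1) = (1-c)(x-1) - \ln x$ on $[1+\theta, \infty)$; since $\ln x$ is concave, $\phi$ is convex, so it suffices to check $\phi \geq 0$ at the left endpoint $x = 1+\theta$ and check the sign of the derivative there (or use that $g(x)/(x-1) \to \infty$ as $x\to\infty$ while the critical value of $g(x)/(x-1)$ over $[1+\theta,\infty)$ is attained at the endpoint). At $x = 1+\theta$ we get $g(1+\theta)/( (1+\theta)-1) = (\theta - \ln(1+\theta))/\theta = 1 - \ln(1+\theta)/\theta = c$, so equality holds exactly at the boundary, which pins down $c$. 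For $\cB_\theta$, I would study $\psi(x) := g(x) - \tfrac{c}{\theta}(x-1)^2$ on $(0, 1+\theta]$ and show it is nonnegative; the natural route is to verify $\psi(1) = 0$, $\psi'(1) = 0$, and then analyze $\psi''(x) = \tfrac{1}{x^2} - \tfrac{2c}{\theta}$ together with the endpoint value $\psi(1+\theta)$, using the fact that $\tfrac{c}{\theta}(x-1)^2 \leq \tfrac{c}{\theta}\,\theta\,(x-1) = c(x-1)$ comparisons let me reduce the quadratic bound on $\cB_\theta$ to the same boundary computation. Once both scalar inequalities are in hand, summing over the two index sets and recognizing $\tfrac{1}{\theta}\sum_{i\in\cB_\theta} p_i\bigl(\tfrac{q_i-p_i}{p_i}\bigr)^2 = \tfrac{1}{\theta}\sum_{i\in\cB_\theta} \tfrac{(q_i-p_i)^2}{p_i}$ yields \eqref{eq:gen-pinsker} directly. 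Finally, specializing to $\theta = 1$ recovers $c = 1 - \ln 2$ and the set definitions $\cA = \{q_i > 2p_i\}$, $\cB = \{q_i \leq 2p_i\}$, giving \eqref{eq:pinskerp} as the stated corollary.
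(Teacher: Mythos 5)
Your proposal is correct and is essentially the paper's own argument: the two scalar inequalities $g(x)\geq c\,(x-1)$ on $[1+\theta,\infty)$ and $g(x)\geq \tfrac{c}{\theta}(x-1)^2$ on $(0,1+\theta]$ are exactly the paper's calculus facts $(1+t)\leq e^{a_\theta t}$ for $t\geq\theta$ and $(1+t)\leq e^{t-b_\theta t^2}$ for $t\leq\theta$ rewritten in logarithmic form, and your use of $\sum_i(q_i-p_i)=0$ to absorb the linear term is the same cancellation the paper performs (just applied before, rather than after, the pointwise bounds). The only difference is cosmetic ordering of the algebra, so nothing further is needed beyond carrying out the endpoint/convexity checks you describe.
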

\noindent
When $\theta = 1$, we get \eqref{eq:pinskerp}.


%
\begin{proof}[{\bf Proof of Lemma~\ref{lem:gen-pinsker}:}]
	We need the following fact which follows from calculus; we provide a proof later for completeness.
	\begin{lemma}\label{fact:1}\label{fact:2}
		Given any $\theta > 0$, define $a_\theta := \frac{\ln(1+\theta)}{\theta}$ and $b_\theta := \frac{1}{\theta}\left(1 - \frac{\ln(1+\theta)}{\theta}\right)$. Then,
		\vspace{-5pt}
		\begin{itemize}
			\item For $t\geq \theta$, $(1+t) \leq e^{a_\theta t}$
			\item For $t \leq \theta$, $(1+t) \leq e^{t - b_\theta t^2}$
		\end{itemize}
	\end{lemma}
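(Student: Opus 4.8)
The plan is to treat the two inequalities separately, reducing each to the monotonicity of an auxiliary one-variable function. For the first inequality, I would rewrite $(1+t)\le e^{a_\theta t}$ as $\ln(1+t)\le a_\theta t$, i.e.\ $\frac{\ln(1+t)}{t}\le\frac{\ln(1+\theta)}{\theta}$ for $t\ge\theta$. Hence it suffices to show that $g(t):=\frac{\ln(1+t)}{t}$ is nonincreasing on $(0,\infty)$, after which $t\ge\theta$ immediately yields $g(t)\le g(\theta)=a_\theta$. Monotonicity of $g$ follows by computing $g'(t)=\frac{1}{t^2}\big(\frac{t}{1+t}-\ln(1+t)\big)$ and noting that the bracketed numerator vanishes at $t=0$ and has derivative $-\frac{t}{(1+t)^2}<0$, so it is negative for every $t>0$; thus $g'(t)<0$. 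I expect this step to be routine.

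For the second inequality I would set $\psi(t):=\ln(1+t)-t+b_\theta t^2$ and prove $\psi(t)\le 0$ for all $-1<t\le\theta$, which is exactly the claim $\ln(1+t)\le t-b_\theta t^2$. First I record two endpoint facts: $\psi(0)=0$, and---crucially---$\psi(\theta)=0$, which is precisely what the definition $b_\theta=\frac1\theta\big(1-\frac{\ln(1+\theta)}{\theta}\big)$ is engineered to give, since $b_\theta\theta^2=\theta-\ln(1+\theta)$. Next I compute $\psi'(t)=\frac{-t}{1+t}+2b_\theta t=t\big(2b_\theta-\frac{1}{1+t}\big)$, whose only interior zeros are $t=0$ and $t_0:=\frac{1}{2b_\theta}-1$.

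The heart of the argument is to show that $\psi$ is U-shaped on $[0,\infty)$ with a unique minimum at $t_0$, and then to combine this with the two equal endpoint values $\psi(0)=\psi(\theta)=0$. The main obstacle is establishing $t_0>0$, equivalently $b_\theta<\tfrac12$; I expect to obtain this from the standard inequality $\ln(1+\theta)\ge\theta-\tfrac{\theta^2}{2}$ for $\theta\ge 0$ (proved in one line by checking that $f(x)=\ln(1+x)-x+\tfrac{x^2}{2}$ has $f(0)=0$ and $f'(x)=\frac{x^2}{1+x}\ge 0$), which rearranges exactly to $b_\theta<\tfrac12$. Granting $t_0>0$, the factorization of $\psi'$ gives $\psi'>0$ on $(-1,0)$, $\psi'<0$ on $(0,t_0)$, and $\psi'>0$ on $(t_0,\infty)$: so $\psi$ rises to $0$ at the origin, dips to a strict minimum at $t_0$, then rises again. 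Since $\psi(0)=\psi(\theta)=0$ with a single valley between, $t_0$ must lie in $(0,\theta)$ and $\psi\le 0$ throughout $[0,\theta]$, while on $(-1,0]$ monotonicity gives $\psi\le\psi(0)=0$ as well; the limit $t\to-1^+$ is trivial as $\ln(1+t)\to-\infty$. This yields $\psi\le 0$ on the whole range $(-1,\theta]$, which is the desired bound and covers the intended substitution $t=\frac{q_i-p_i}{p_i}\ge -1$.
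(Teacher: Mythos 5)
Your proof is correct, but it takes a genuinely different route from the paper's, most notably in the second (and harder) part. The paper works directly with the exponential-domain functions $f(t)=e^{a_\theta t}-(1+t)$ and $g(t)=e^{t(1-b_\theta t)}-(1+t)$: the first inequality follows from monotonicity of $f$ on $[\theta,\infty)$, but for the second the paper argues \emph{indirectly} --- since $g'$ is transcendental, it shows $g$ cannot vanish inside $(0,\theta)$ by a Rolle-type counting of zeros of $g'$ and $g''$ (the latter being an exponential times a quadratic), and this requires \emph{both} auxiliary bounds $\frac{1}{2(1+\theta)}<b_\theta<\frac12$, each extracted from a separate logarithm inequality. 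You instead pass to the logarithmic domain, where everything becomes explicit: your first part reduces to the standard fact that $\ln(1+t)/t$ is decreasing on $(0,\infty)$, and in your second part $\psi'(t)=t\left(2b_\theta-\frac{1}{1+t}\right)$ factors, so its critical points $0$ and $t_0=\frac{1}{2b_\theta}-1$ are visible, and the U-shape combined with the engineered endpoint identity $\psi(\theta)=0$ \emph{forces} $t_0\in(0,\theta)$ and $\psi\le 0$. This buys you a direct argument with no contradiction/zero-counting step, and you need only the single bound $b_\theta<\frac12$; the paper's route, in exchange, treats all $t\le\theta$ (including $t\le -1$) uniformly within one framework.

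Two one-line patches you should add. First, your definition of $t_0$ and the equivalence ``$t_0>0$ iff $b_\theta<\frac12$'' tacitly assume $b_\theta>0$; this holds because $\ln(1+\theta)<\theta$ for $\theta>0$ (the function $x-\ln(1+x)$ vanishes at $0$ and has derivative $\frac{x}{1+x}>0$). Second, the lemma as stated allows $t\le-1$, where your $\psi$ is undefined; there the claim is trivial since $1+t\le 0<e^{t-b_\theta t^2}$, so say this explicitly rather than appealing to the intended substitution $t=\frac{q_i-p_i}{p_i}\ge-1$. With these, your argument covers the full stated range.
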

	
	\noindent
	Define $\eta_i := \frac{q_i - p_i}{p_i}$. Note that $\cA_\theta = \{i: \eta_i > \theta\}$ and $\cB_\theta$ is the rest.
	We can write the KL-divergence as follows
	\[
	\bD_{KL}(p||q) := \sum_{i=1}^n p_i\ln(p_i/q_i) = - \sum_{i=1}^n p_i \ln(1+\eta_i)
	\]
	For $i\in \cA_\theta$, since $\eta_i > \theta$, we upper bound $(1+\eta_i) \leq  e^{a_\theta\eta_i}$ using Lemma~\ref{fact:1}. 
	For $i\in \cB_\theta$, that is $\eta_i\leq \theta$, we upper bound $(1+\eta_i) \leq e^{\eta_i - b_\theta\eta^2_i}$ using Lemma~\ref{fact:2}.
	Lastly, we note $\sum_i p_i\eta_i = 0$ since $p,q$ both sum to $1$, implying $\sum_{i\in \cB_\theta}p_i\eta_i = -\sum_{i\in \cA_\theta}p_i\eta_i$. Putting all this in the definition above we get
	\[
	\bD_{KL}(p||q) \geq  - a_\theta\cdot \sum_{i\in \cA_\theta} p_i\eta_i - \sum_{i\in \cB_\theta} p_i\eta_i + b_\theta\sum_{i\in \cB_\theta}p_i\eta^2_i 
	= (1-a_\theta) \sum_{i\in \cA_\theta}p_i\eta_i + b_\theta \sum_{i\in \cB_\theta} p_i\eta^2_i 
	\]
	The proof of inequality~\eqref{eq:gen-pinsker} follows by noting that $b_\theta = \frac{1-a_\theta}{\theta}$.
\end{proof}
\noindent

\begin{proof}[Proof of Lemma~\ref{fact:1}]
	The proof of both facts follow by proving non-negativity of the relevant function in the relevant interval.
	Recall $a_\theta = \ln(1+\theta)/\theta$ and $b_\theta = \frac{1}{\theta}(1-a_\theta)$.
	We start with the following three  inequalities
	about the log-function.
	\begin{equation}\label{eq:easier}
	\textrm{For all $z>0$,}~~~~
	z + z^2/2 > (1+z)\ln(1+z) > z ~~~~ \textrm{ and} ~~~~~
	\ln(1+z) > z - z^2/2 
	\end{equation}
	The third inequality in \eqref{eq:easier} implies $a_\theta > 1 - \theta/2$ and thus, $b_\theta < 1/2$.
	The first inequality in \eqref{eq:easier} implies $a_\theta < \frac{1+\frac{\theta}{2}}{1+\theta}$ which in turn implies $b_\theta > 1/2(1+\theta)$.
	For brevity, henceforth let us lose the subscript on $a_\theta$ and $b_\theta$. 
	
	Consider the function $f(t) = e^{at} - (1+t)$. Note that $f'(t) = ae^{at} - 1$ which is increasing in $t$ since $a>0$. 
	So, for any $t\geq \theta$, we have $f'(t) \geq ae^{a\theta}-1 = \frac{(1+\theta)\ln(1+\theta)}{\theta}-1 \geq 0$, by the second inequality in \eqref{eq:easier}.
	Therefore,
	$f$ is increasing when $t\geq \theta$. The first part of Fact~\ref{fact:1} follows since $f(\theta) = 0$ by definition of $a$. \medskip
	
	Consider the function $g(t) = e^{t(1-bt)} - (1+t)$. Note that $g(0) = g(\theta) = 0$. 	
	We break the argument in two parts: we argue that $g(t)$ is strictly positive for all $t\leq 0$, and that $g(t)$ is strictly positive for $t\in (0,\theta)$. This will prove the second part of Fact~\ref{fact:1}. 
	
	The first derivative is $g'(t) = (1-2bt)e^{t(1-bt)} - 1$ and the second derivative is $g''(t) = e^{t(1-bt)}\left((1-2bt)^2 - 2b\right)$. Since $b < 1/2$, we have $2b < 1$, and thus for $t\leq 0$, $g''(t) > 0$. Therefore, $g'$ is strictly increasing for $t\leq 0$. However, $g'(0) = 0$, and so $g'(t) < 0$ for all $t< 0$. This implies $g$ is strictly decreasing in the interval $t< 0$.
	Noting $g(0) = 0$, we get $g(t) > 0$ for all $t< 0$. This completes the first part of the argument.
	
	For the second part, we first note that $g'(\theta) < 0$ since $b > \frac{1}{2(1+\theta)}$.
	That is, $g$ is strictly decreasing at $\theta$. On the other hand $g$ is increasing at $\theta$. To see this, looking at $g'$ is not enough since $g'(0) = 0$. However, $g''(0) > 0$ since $b<1/2$. This means that $0$ is a strict (local) minimum for $g$ implying $g$ is increasing at $0$. In sum, $g$ vanishes at $0$ and $\theta$, and is increasing at $0$ and decreasing at $\theta$.
	This means that if $g$ does vanish at some $r\in (0,\theta)$, then it must vanish once again in $[r,\theta)$ for the it to be decreasing at $\theta$. In particular, $g'$ must vanish three times in $(0,\theta)$ and thus four times in $[0,\theta)$ since $g'(0) = 0$. This in turn implies $g''$ vanishes three times in $[0,\theta)$ which is a contradiction since $g''$ is a quadratic in $t$ multiplied by a positive term.\smallskip
	
	We end by proving \eqref{eq:easier}. This also follows the same general methodology. Define $p(z) := (1+z)\ln(1+z) - z$ and $q(z) := p(z) - z^2/2$. Differentiating, 
	we get $p'(z) = \ln(1+z) > 0$ for all $z>0$, and $q'(z) = \ln(1+z)-z < 0$ for all $z>0$. Thus, $p$ is increasing, and $q$ is decreasing, in $(0,\infty)$. The first two inequalities of \eqref{eq:easier} follow since $p(0) = q(0) = 0$. 
	To see the third inequality, define $r(z) = \ln(1+z) - z + z^2/2$ and observe $r'(z) = \frac{1}{1+z} - 1 + z = \frac{z^2}{1+z}$
	which is $>0$ if $z>0$. Thus $r$ is strictly increasing, and the third inequality of \eqref{eq:easier} follows since $r(0)=0$.
	
\end{proof}

\subsection{Comparison with other well-known inequalities}
We connect \eqref{eq:pinskerp} with two well known lower bounds on the KL-Divergence. 
First we compare with Pinsker's inequality~\eqref{eq:pinsker}.
To see that \eqref{eq:pinskerp} generalizes \eqref{eq:pinsker} with a weaker constant, note that 
\[
||p-q||^2_1 = \left(\sum_{i\in \cA}|q_i-p_i| + \sum_{i\in \cB}|q_i-p_i|\right)^2 \leq 2\left(\sum_{i\in \cA}|q_i-p_i|\right)^2 + 2\left(\sum_{i\in \cB}p_i \frac{|q_i-p_i|}{p_i}\right)^2
\]
The first parenthetical term above, since it is $\leq 1$, is at most the first summation in the parenthesis of \eqref{eq:pinskerp}.
The second parenthetical term above, by Cauchy-Schwarz, is at most the second summation in the parenthesis of \eqref{eq:pinskerp}. 
Thus \eqref{eq:pinskerp} implies 
\[
\bD_{KL}(p||q) \geq \frac{(1-\ln 2)}{2}||p-q||^2_1
\] 
On the other hand, the RHS of \eqref{eq:pinskerp} can be much larger than that of \eqref{eq:pinsker}.
For instance, suppose $p_i=1/n$ for all $i$, $q_1 = 1/n + 1/\sqrt{n}$, and for $i\neq 1$, $q_i = 1/n - \frac{1}{(n-1)\sqrt{n}}$.
The RHS of \eqref{eq:pinsker} is $\Theta(1/n)$ while that of \eqref{eq:pinskerp} is $\Theta(1/\sqrt{n})$ which is the correct order of magnitude for $\bD_{KL}(p||q)$. \medskip

The KL-divergence between two distributions is also at least the {\em Hellinger distance} between them. Before proceeding, let us define this distance.
\[
\textrm{Given two distributions $p,q$ over $[n]$,} ~~~ \bD_\mathsf{Hellinger}(p,q) := \left(\sum_{i=1}^n \left(\sqrt{p_i} - \sqrt{q_i}\right)^2\right)^{1/2}
\]
The following inequality is known (see Reiss~\cite{reiss} p 99, Pollard~\cite{pollard-book} Chap 3.3, or the webpage~\cite{Harsha} for a proof).\
\begin{equation}
\label{eq:KL-vs-Hellinger}
\textrm{For any two distributions $p,q$,}~~~ \bD_{KL}(p||q) \geq \bD^2_\mathsf{Hellinger}(p,q) \tag{\bf KL-vs-Hellinger}
\end{equation}
It seems natural to compare the RHS of \eqref{eq:pinskerp} and  \eqref{eq:KL-vs-Hellinger} (we thank Daniel Dadush for bringing this to our attention).
As the subsequent calculation shows, the RHS of \eqref{eq:pinskerp} is in fact $\Theta(\bD^2_\mathsf{Hellinger}(p,q))$. In particular, this implies one can obtain (by reverse engineering the argument below) part 2 of Theorem 2 via the application of \eqref{eq:KL-vs-Hellinger} as well. \\

\noindent
For the set $\cA = \{i: q_i > 2p_i\}$, we know $\sqrt{q_i} + \sqrt{p_i} = \Theta(\sqrt{q_i} - \sqrt{p_i})$.
Therefore, 
\[
\sum_{i\in \cA} (q_i - p_i) = \sum_{i\in A} \left(\sqrt{q_i} + \sqrt{p_i}\right)\left(\sqrt{q_i} - \sqrt{p_i}\right) = \Theta\left(\sum_{i\in \cA} \left(\sqrt{q_i} - \sqrt{p_i}\right)^2\right)
\]
\noindent
For any $i\in \cB = \{i: q_i \le 2p_i\}$, let $q_i = (1+\eta_i)p_i$ where $-1 \le \eta_i \leq 1$. 
Via a Taylor series expansion it is not hard to check
$\left(1 + \frac{\eta_i}{2} - \sqrt{1+\eta_i}\right) = \Theta(\eta_i^2)$ in this range of $\eta_i$.
Observing that
\[
p_i\left(\frac{q_i - p_i}{p_i}\right)^2 = \eta_i^2p_i ~~~~~~ \textrm{ and } ~~~~~~ 
\left(\sqrt{p_i} - \sqrt{q_i}\right)^2 = 2p_i\left(1 + \frac{\eta_i}{2} - \sqrt{1+\eta_i} \right)
\]
we get that the RHS of \eqref{eq:pinskerp} is $\Theta(\bD^2_{\mathsf{Hellinger}}(p,q))$.

\subsection*{Acknowledgements}

We thank Daniel Dadush for asking the connection between our inequality and Hellinger distance, and Jonathan Weed for letting us know of~\cite{AWR-NIPS17}.


\bibliographystyle{abbrv}
\bibliography{refs}

\begin{thebibliography}{10}

\bibitem{Aaronson}
S.~Aaronson.
\newblock Quantum computing and hidden variables.
\newblock {\em Phys. Rev. A}, 71:032325, Mar 2005.

\bibitem{AllenzhuFOCS}
Z.~{Allen Zhu}, Y.~Li, R.~Oliveira, and A.~Wigderson.
\newblock Much faster algorithms for matrix scaling.
\newblock {\em 58th {IEEE} Annual Symposium on Foundations of Computer Science,
  {FOCS}}, 2017.

\bibitem{AllenZhuOrecchia15}
Z.~{Allen Zhu} and L.~Orecchia.
\newblock Using optimization to break the epsilon barrier: {A} faster and
  simpler width-independent algorithm for solving positive linear programs in
  parallel.
\newblock In {\em Proceedings of the Twenty-Sixth Annual {ACM-SIAM} Symposium
  on Discrete Algorithms, {SODA}, San Diego, CA, USA}, pages 1439--1456, 2015.

\bibitem{AWR-NIPS17}
J.~Altschuler, J.~Weed, and P.~Rigollet.
\newblock Near-linear time approximation algorithms for optimal transport via
  sinkhorn iteration.
\newblock In I.~Guyon, U.~von Luxburg, S.~Bengio, H.~M. Wallach, R.~Fergus,
  S.~V.~N. Vishwanathan, and R.~Garnett, editors, {\em Advances in Neural
  Information Processing Systems 30: Annual Conference on Neural Information
  Processing Systems 2017, 4-9 December 2017, Long Beach, CA, {USA}}, pages
  1961--1971, 2017.

\bibitem{Bachrach}
M.~Bacharach.
\newblock Estimating nonnegative matrices from marginal data.
\newblock {\em International Economic Review}, 6(3):294--310, 1965.

\bibitem{BalakrishnanHT04}
H.~Balakrishnan, I.~Hwang, and C.~J. Tomlin.
\newblock Polynomial approximation algorithms for belief matrix maintenance in
  identity management.
\newblock In {\em 2004 43rd IEEE Conference on Decision and Control (CDC) (IEEE
  Cat. No.04CH37601)}, volume~5, pages 4874--4879 Vol.5, Dec 2004.

\bibitem{Bapat-Raghavan}
R.~Bapat and T.~Raghavan.
\newblock An extension of a theorem of {Darroch} and {Ratcliff} in loglinear
  models and its application to scaling multidimensional matrices.
\newblock {\em Linear Algebra and its Applications}, 114:705 -- 715, 1989.
\newblock Special Issue Dedicated to Alan J. Hoffman.

\bibitem{Bregman}
L.~Bregman.
\newblock The relaxation method of finding the common point of convex sets and
  its application to the solution of problems in convex programming.
\newblock {\em USSR Computational Mathematics and Mathematical Physics},
  7(3):200 -- 217, 1967.

\bibitem{BrualdiPS}
R.~A. Brualdi, S.~V. Parter, and H.~Schneider.
\newblock The diagonal equivalence of a nonnegative matrix to a stochastic
  matrix.
\newblock {\em Journal of Mathematical Analysis and Applications}, 16(1):31 --
  50, 1966.

\bibitem{CohenFOCS}
M.~B. Cohen, A.~Madry, D.~Tsipras, and A.~Vladu.
\newblock Matrix scaling and balancing via box constrained newton's method and
  interior point methods.
\newblock {\em 58th {IEEE} Annual Symposium on Foundations of Computer Science,
  {FOCS}}, 2017.

\bibitem{Csiszar1}
I.~Csiszar.
\newblock I-divergence geometry of probability distributions and minimization
  problems.
\newblock {\em Ann. Probab.}, 3(1):146--158, 02 1975.

\bibitem{Csiszar2}
I.~Csiszar.
\newblock A geometric interpretation of {Darroch} and {R}atcliff's generalized
  iterative scaling.
\newblock {\em Ann. Statist.}, 17(3):1409--1413, 09 1989.

\bibitem{DemingStephan}
W.~E. Deming and F.~F. Stephan.
\newblock On a least squares adjustment of a sampled frequency table when the
  expected marginal totals are known.
\newblock {\em Ann. Math. Statist.}, 11(4):427--444, 12 1940.

\bibitem{FranklinLorenz}
J.~Franklin and J.~Lorenz.
\newblock On the scaling of multidimensional matrices.
\newblock {\em Linear Algebra and its Applications}, 114:717 -- 735, 1989.
\newblock Special Issue Dedicated to Alan J. Hoffman.

\bibitem{GurvitsY}
L.~Gurvits and P.~N. Yianilos.
\newblock The deflation-inflation method for certain semidefinite programming
  and maximum determinant completion problems.
\newblock Technical report, NEC Research Institute, 4 Independence Way,
  Princeton, NJ 08540, 1998.

\bibitem{Idel}
M.~{Idel}.
\newblock {A review of matrix scaling and Sinkhorn's normal form for matrices
  and positive maps}.
\newblock {\em ArXiv e-prints}, Sept. 2016.

\bibitem{KK93}
B.~Kalantari and L.~Khachiyan.
\newblock On the rate of convergence of deterministic and randomized {RAS}
  matrix scaling algorithms.
\newblock {\em Oper. Res. Lett.}, 14(5):237--244, 1993.

\bibitem{KalantariKhachiyan}
B.~Kalantari and L.~Khachiyan.
\newblock On the complexity of nonnegative-matrix scaling.
\newblock {\em Linear Algebra and its Applications}, 240:87 -- 103, 1996.

\bibitem{KLRS}
B.~Kalantari, I.~Lari, F.~Ricca, and B.~Simeone.
\newblock On the complexity of general matrix scaling and entropy minimization
  via the {RAS} algorithm.
\newblock {\em Technical Report, n.24, Department of Statistics and Applied
  probability, La Sapienza University, Rome}, 2002.

\bibitem{KLRS08}
B.~Kalantari, I.~Lari, F.~Ricca, and B.~Simeone.
\newblock On the complexity of general matrix scaling and entropy minimization
  via the {RAS} algorithm.
\newblock {\em Math. Program.}, 112(2):371--401, 2008.

\bibitem{LSW00}
N.~Linial, A.~Samorodnitsky, and A.~Wigderson.
\newblock A deterministic strongly polynomial algorithm for matrix scaling and
  approximate permanents.
\newblock {\em Combinatorica}, 20(4):545--568, 2000.

\bibitem{LubyNisan}
M.~Luby and N.~Nisan.
\newblock A parallel approximation algorithm for positive linear programming.
\newblock In {\em Proceedings of the Twenty-fifth Annual ACM Symposium on
  Theory of Computing}, STOC '93, pages 448--457, New York, NY, USA, 1993. ACM.

\bibitem{Macgill}
S.~M. Macgill.
\newblock Theoretical properties of biproportional matrix adjustments.
\newblock {\em Environment and Planning A}, 9(6):687--701, 1977.

\bibitem{RaoICALP16}
M.~W. Mahoney, S.~Rao, D.~Wang, and P.~Zhang.
\newblock {Approximating the Solution to Mixed Packing and Covering LPs in
  Parallel $O(\epsilon^{-3})$ Time}.
\newblock In {\em 43rd International Colloquium on Automata, Languages, and
  Programming, {ICALP} 2016, July 11-15, 2016, Rome, Italy}, pages 52:1--52:14,
  2016.

\bibitem{Menon}
M.~Menon.
\newblock Reduction of a matrix with positive elements to a doubly stochastic
  matrix.
\newblock {\em Proceedings of the American Mathematical Society},
  18(2):244--247, 1967.

\bibitem{NemRoth}
A.~Nemirovskii and U.~Rothblum.
\newblock On complexity of matrix scaling.
\newblock {\em Linear Algebra and its Applications.}, 302:435--460, 1999.

\bibitem{OrtuzarWillumsen}
J.~d.~D. Ort\'uzar and L.~G. Willumsen.
\newblock {\em Modelling Transport}.
\newblock John Wiley \& Sons, Ltd, 2011.

\bibitem{pollard-book}
D.~Pollard.
\newblock {\em A User's Guide to Measure Theoretic Probability}.
\newblock Cambridge Series in Statistical and Probabilistic Mathematics.
  Cambridge University Press, 2001.

\bibitem{Raghavan}
T.~Raghavan.
\newblock On pairs of multidimensional matrices.
\newblock {\em Linear Algebra and its Applications}, 62:263 -- 268, 1984.

\bibitem{reiss}
R.-D. Reiss.
\newblock {\em Approximate distributions of order statistics}.
\newblock Springer-Verlag, 1989.

\bibitem{RoteZ}
G.~Rote and M.~Zachariasen.
\newblock Matrix scaling by network flow.
\newblock In {\em Proceedings of the Eighteenth Annual ACM-SIAM Symposium on
  Discrete Algorithms}, SODA '07, pages 848--854, Philadelphia, PA, USA, 2007.
  Society for Industrial and Applied Mathematics.

\bibitem{RothSch}
U.~Rothblum and H.~Schneider.
\newblock Scalings of matrices which have prespecified row sums and column sums
  via optimization.
\newblock {\em Linear Algebra and its Applications}, 114:737 -- 764, 1989.
\newblock Special Issue Dedicated to Alan J. Hoffman.

\bibitem{Ruschendorf1995}
L.~Ruschendorf.
\newblock Convergence of the iterative proportional fitting procedure.
\newblock {\em Ann. Statist.}, 23(4):1160--1174, 1995.

\bibitem{jeju2015}
I.~Sason and S.~Verd{\'{u}}.
\newblock Upper bounds on the relative entropy and r{\'{e}}nyi divergence as a
  function of total variation distance for finite alphabets.
\newblock In {\em 2015 {IEEE} Information Theory Workshop - Fall (ITW), Jeju
  Island, South Korea, October 11-15, 2015}, pages 214--218. {IEEE}, 2015.

\bibitem{Schrodinger}
E.~Schr\"{o}dinger.
\newblock \"{U}ber die umkehrung der naturgesetze.
\newblock {\em Preuss. Akad. Wiss., Phys.-Math. Kl.}, 1931.

\bibitem{Sinkhorn}
R.~Sinkhorn.
\newblock Diagonal equivalence to matrices with prescribed row and column sums.
\newblock {\em The American Mathematical Monthly}, 74(4):402--405, 1967.

\bibitem{Sinkhorn-Knopp}
R.~Sinkhorn and P.~Knopp.
\newblock Concerning nonnegative matrices and doubly stochastic matrices.
\newblock {\em Pacific J. Math.}, 21(2):343--348, 1967.

\bibitem{Soules}
G.~W. Soules.
\newblock The rate of convergence of sinkhorn balancing.
\newblock {\em Linear Algebra and its Applications}, 150:3 -- 40, 1991.

\bibitem{Harsha}
Xi'an.
\newblock Statistics stack exchange.
\newblock
  \url{https://stats.stackexchange.com/questions/130432/differences-between-bhattacharyya-distance-and-kl-divergence},
  Dec 27, 2014.
\newblock [Online; accessed 15-January-2018].

\bibitem{Young01}
N.~E. Young.
\newblock Sequential and parallel algorithms for mixed packing and covering.
\newblock In {\em 42nd Annual Symposium on Foundations of Computer Science,
  {FOCS}, Las Vegas, Nevada, {USA}}, pages 538--546, 2001.

\end{thebibliography}


\end{document}